\newcommand{\PR}{\mathbb P}
\newcommand{\G}{\mathcal G}
\journalname{Journal of Statistical Physics}
\begin{document}

\title{A detailed investigation into near degenerate exponential random graphs}

\author{Mei Yin \thanks{Mei Yin's research was partially
supported by NSF grant DMS-1308333.}}

\institute{Mei Yin \at Department of Mathematics, University of
Denver, Denver, CO 80208, USA \\
              \email{mei.yin@du.edu}\\
}

\date{Received: date / Accepted: date}

\maketitle

\begin{abstract}
The exponential family of random graphs has been a topic of continued research interest. Despite the relative simplicity, these models capture a variety of interesting features displayed by large-scale networks and allow us to better understand how phases transition between one another as tuning parameters vary. As the parameters cross certain lines, the model asymptotically transitions from a very sparse graph to a very dense graph, completely skipping all intermediate structures. We delve deeper into this near degenerate tendency and give an explicit characterization of the asymptotic graph structure as a function of the parameters.

\keywords{Exponential random graphs \and Phase transitions \and Near degeneracy}

\subclass{05C80 \and 82B26}
\end{abstract}

\section{Introduction}
\label{intro}
Exponential random graphs represent an important and challenging class of models, displaying both diverse and novel phase transition phenomena. These rather general models are exponential families of probability distributions over graphs, in which dependence between the random edges is defined through certain finite subgraphs, in imitation of the use of potential energy to provide dependence between particle states in a grand canonical ensemble of statistical physics. They are particularly useful when one wants to construct models that resemble observed networks as closely as possible, but without specifying an explicit network formation mechanism. Consider the set $\mathcal{G}_n$ of all simple graphs
$G_n$ on $n$ vertices (``simple'' means undirected, with no loops
or multiple edges). By a $k$-parameter family of exponential
random graphs we mean a family of probability measures
$\PR_n^{\beta}$ on $\G_n$ defined by, for $G_n\in\G_n$,
\begin{equation}
\label{pmf} \PR_n^{\beta}(G_n)=\exp\left(n^2\left(\beta_1
t(H_1,G_n)+\cdots+
  \beta_k t(H_k,G_n)-\psi_n^{\beta}\right)\right),
\end{equation}
where $\beta=(\beta_1,\dots,\beta_k)$ are $k$ real parameters,
$H_1,\dots,H_k$ are pre-chosen finite simple graphs (and we take
$H_1$ to be a single edge), $t(H_i, G_n)$ is the density of graph
homomorphisms (the probability that a random vertex map $V(H_i)
\to V(G_n)$ is edge-preserving),
\begin{equation}
\label{t} t(H_i, G_n)=\frac{|\text{hom}(H_i,
G_n)|}{|V(G_n)|^{|V(H_i)|}},
\end{equation}
and $\psi_n^{\beta}$ is the normalization constant,
\begin{equation}
\label{psi} \psi_n^{\beta}=\frac{1}{n^2}\log\sum_{G_n \in
\mathcal{G}_n} \exp\left(n^2 \left(\beta_1
t(H_1,G_n)+\cdots+\beta_k t(H_k,G_n)\right) \right).
\end{equation}
Intuitively, we can think of the $k$ parameters $\beta_1, \ldots, \beta_k$ as tuning parameters that allow one to adjust the influence of different subgraphs $H_1, \ldots, H_k$ of $G_n$ on the probability distribution and analyze the extent to which specific values of these subgraph densities ``interfere'' with one another. Since the real-world networks the exponential models depict are often very large in size, our main interest lies in exploring the structure of a typical graph drawn from the model when $n$ is large.

This subject has attracted enormous attention in mathematics, as well as in various applied disciplines. Many of the investigations employ the elegant theory of graph limits as developed by Lov\'{a}sz and coauthors (V.T. S\'{o}s, B. Szegedy, C. Borgs, J. Chayes, K. Vesztergombi, \ldots) \cite{BCLSV1} \cite{BCLSV2} \cite{BCLSV3} \cite{Lov} \cite{LS}. Building on earlier work of Aldous \cite{Aldous1} and Hoover \cite{Hoover}, the graph limit theory creates a new set of tools for representing and studying the asymptotic behavior of graphs by connecting sequences of graphs to a unified graphon space equipped with a cut metric. Though the theory itself is tailored to dense graphs, serious attempts have been made at formulating parallel results for sparse graphs \cite{AL} \cite{BS} \cite{BCCZ1} \cite{BCCZ2} \cite{CD2} \cite{LZ2}. Applying the graph limit theory to $k$-parameter exponential random graphs and utilizing a large deviations result for Erd\H{o}s-R\'{e}nyi graphs established in Chatterjee and Varadhan \cite{CV}, Chatterjee and Diaconis \cite{CD1} showed that when $n$ is large and $\beta_2,\ldots,\beta_k$ are non-negative, a typical graph drawn from the ``attractive'' exponential model (\ref{pmf}) looks like an Erd\H{o}s-R\'{e}nyi graph $G(n, u^*)$ in the graphon sense, where the edge presence probability $u^*(\beta_1,\ldots,\beta_k)$ is picked randomly from the set of maximizers of a variational problem for the limiting normalization constant $\psi_\infty^\beta=\lim_{n\to \infty}\psi_n^{\beta}$:
\begin{equation}
\label{max}
\psi_{\infty}^{\beta}=\sup_{0\leq u\leq
1}\left(\beta_1 u^{e(H_1)}+\cdots+\beta_k
u^{e(H_k)}-\frac{1}{2}u\log u-\frac{1}{2}(1-u)\log(1-u)\right),
\end{equation}
where $e(H_i)$ is the number of edges in $H_i$. They also noted that in the edge-(multiple)-star model where $H_j$ is a $j$-star for $j=1,\ldots,k$, due to the unique structure of stars, maximizers of the variational problem for $\psi_\infty^\beta$ for all parameter values satisfy (\ref{max}) and a typical graph drawn from the model is always Erd\H{o}s-R\'{e}nyi. Since the limiting normalization constant is the generating function for the limiting expectations of other random variables on the graph space such as expectations and correlations of homomorphism densities, these crucial observations connect the occurrence of an asymptotic phase transition in (\ref{pmf}) with an abrupt change in the solution of (\ref{max}) and have led to further exploration into exponential random graph models and their variations.

Being exponential families with finite support, one might expect exponential random graph models to enjoy a rather basic asymptotic form, though in fact, virtually all these models are highly nonstandard as the size of the network increases. The $2$-parameter exponential random graph models have therefore generated continued research interest. These prototype models are simpler than their $k$-parameter extensions but nevertheless exhibit a wealth of non-trivial attributes and capture a variety of interesting features displayed by large networks. The relative simplicity furthermore helps us better understand how phases transition between one another as tuning parameters vary and provides insight into the expressive power of the exponential construction. In the statistical physics literature, phase transition is often associated with loss of analyticity in the normalization constant, which gives rise to discontinuities in the observed statistics. For exponential random graph models, phase transition is characterized as a sharp, unambiguous partition of parameter ranges separating those values in which changes in parameters lead to smooth changes in the homomorphism densities, from those special parameter values where the response in the densities is singular.

For the ``attractive'' $2$-parameter edge-triangle model obtained by taking $H_1=K_2$ (an edge), $H_2=K_3$ (a triangle), and $\beta_2\geq 0$, Chatterjee and Diaconis \cite{CD1} gave the first rigorous proof of asymptotic singular behavior and identified a curve $\beta_2=q(\beta_1)$ across which the model transitions from very sparse to very dense, completely skipping all intermediate structures. In further works (see for example, Radin and Yin \cite{RY}, Aristoff and Zhu \cite{AZ1}), this singular behavior was discovered universally in generic $2$-parameter models where $H_1$ is an edge and $H_2$ is any finite simple graph, and the transition curve $\beta_2=q(\beta_1)$ asymptotically approaches the straight line $\beta_2=-\beta_1$ as the parameters diverge. The double asymptotic framework of \cite{CD1} was later extended in \cite{YRF}, and the scenario in which the parameters diverge along general straight lines $\beta_1=a\beta_2$, where $a$ is a constant and $\beta_2 \rightarrow \infty$, was considered. Consistent with the near degeneracy predictions in \cite{AZ1} \cite{CD1} \cite{RY}, asymptotically for $a\leq -1$, a typical graph sampled from the ``attractive'' $2$-parameter exponential model is sparse, while for $a>-1$, a typical graph is nearly complete. Although much effort has been focused on $2$-parameter models, $k$-parameter models have also been examined. As shown in \cite{Yin1}, near degeneracy and universality are expected not only in generic $2$-parameter models but also in generic $k$-parameter models. Asymptotically, a typical graph drawn from the ``attractive'' $k$-parameter exponential model where $\beta_2,\ldots, \beta_k \geq 0$ is sparse below the hyperplane $\sum_{i=1}^k \beta_i=0$ and nearly complete above it. For the edge-(multiple)-star model, the desirable star feature relates to network expansiveness and has made predictions of similar asymptotic phenomena possible in broader parameter regions. Related results may be found in H\"{a}ggstr\"{o}m and Jonasson \cite{HJ}, Park and Newman \cite{PN}, Bianconi \cite{Bianconi}, Lubetzky and Zhao \cite{LZ1}, Radin and Sadun \cite{RS1} \cite{RS2}, and Kenyon et al. \cite{KRRS}.

These theoretical findings have advanced our understanding of
phase transitions in exponential random graph models, yet some
important questions remain unanswered. Previous investigations
identified near degenerate parameter regions in which a typical
sampled graph looks like an Erd\H{o}s-R\'{e}nyi graph $G(n, u^*)$,
where the edge presence probability $u^*\rightarrow 0$ or $1$, but
the speed of $u^*$ towards these two degenerate states is not at
all clear. When a typical graph is sparse ($u^*\rightarrow 0$),
how sparse is it? When a typical graph is nearly complete
$(u^*\rightarrow 1)$, how dense is it? Can we give an explicit
characterization of the near degenerate graph structure as a
function of the parameters? The rest of this paper is dedicated
towards these goals. Some of the ideas for the sparse case were
partially implemented in \cite{YZ1}. Theorem \ref{2generic}
considers generic ``attractive'' $2$-parameter exponential random
graph models and Theorem \ref{singlestar} derives parallel results
for ``repulsive'' edge-(single)-star models. The asymptotic
characterization of $u^*$ obtained in these theorems then make
possible a deeper exploration into the asymptotics of the limiting
normalization constant of the exponential model in Theorem
\ref{norm}, which indicates that though a typical graph displays
Erd\H{o}s-R\'{e}nyi feature, the simplified Erd\H{o}s-R\'{e}nyi
graph and the real exponential graph are not exact asymptotic
analogs in the usual statistical physics sense. In the sparse
region, the Erd\H{o}s-R\'{e}nyi graph does seem to reflect the
asymptotic tendency of the exponential graph more accurately, as
the two interpretations of the limiting normalization constant
coincide when the parameters diverge. Lastly, Theorems
\ref{generic} and \ref{inf} further extend the near degenerate
analysis from $2$-parameter exponential random graph models to
$k$-parameter exponential random graph models.

\section{Investigating near degeneracy}
\label{deg}
This section explores the exact asymptotics of generic $2$-parameter exponential random graph models where $\beta_2\geq 0$ near degeneracy. The analysis is then extended to $\beta_2<0$ for the edge-(single)-star model. By including only two subgraph densities in the exponent,
\begin{equation}
\label{2pmf} \PR_n^{\beta}(G_n)=\exp\left(n^2\left(\beta_1
t(H_1,G_n)+\beta_2 t(H_2,G_n)-\psi_n^{\beta}\right)\right),
\end{equation}
where $H_1$ is an edge and $H_2$ is a different finite simple graph, the $2$-parameter models are arguably simpler than their $k$-parameter generalizations. As illustrated in Chatterjee and Diaconis \cite{CD1}, when $n$ is large and $\beta_2$ is non-negative, a typical graph drawn from the ``attractive'' $2$-parameter exponential model (\ref{2pmf}) behaves like an Erd\H{o}s-R\'{e}nyi graph $G(n, u^*)$, where the edge presence probability $u^*(\beta_1,\beta_2)$ is picked randomly from the set of maximizers of a variational problem for the limiting normalization constant $\displaystyle \psi^\beta_\infty$:
\begin{equation}
\label{reduce}
\psi_\infty^\beta=\sup_{0\leq u\leq 1}\left(\beta_1 u+\beta_2 u^p-\frac{1}{2}u\log u-\frac{1}{2}(1-u)\log(1-u)\right),
\end{equation}
where $p$ is the number of edges in $H_2$, and thus satisfies
\begin{equation}
\label{sd}
\beta_1+\beta_2p(u^*)^{p-1}=\frac{1}{2}\log\left(\frac{u^*}{1-u^*}\right).
\end{equation}
This implicitly describes $u^*$ as a function of the parameters $\beta_1$ and $\beta_2$, but a closed-form solution is not obtainable except when $\beta_2=0$, which gives $u^*(\beta_1,0)=e^{2\beta_1}/(1+e^{2\beta_1})$. For $\beta_1$ large negative, $u^*(\beta_1,0)$ asymptotically behaves like $e^{2\beta_1}$, while for $\beta_1$ large positive, $u^*(\beta_1,0)$ asymptotically behaves like $1-e^{-2\beta_1}$. We would like to know if these asymptotic results could be generalized. By \cite{YRF}, taking $\beta_1=a\beta_2$ and $\beta_2\rightarrow \infty$, $u^* \rightarrow 0$ when $a\leq -1$ and $u^* \rightarrow 1$ when $a>-1$. Equivalently, for $(\beta_1,\beta_2)$ sufficiently far away from the origin, $u^* \rightarrow 0$ when $\beta_1\leq -\beta_2$ and $u^* \rightarrow 1$ when $\beta_1>-\beta_2$. As regards the speed of $u^*$ towards these two degenerate states, simulation results suggest that $u^*$ is asymptotically $e^{2\beta_1}$ in the former case and is asymptotically $1-e^{-2(\beta_1+p\beta_2)}$ in the latter case. See Tables \ref{table1} and \ref{table2} and Figure \ref{figure1}. Even for $\beta$ with small magnitude, the asymptotic tendency of $u^*$ is quite evident.

\begin{table}
\begin{center}
\begin{tabular}{ccccc}
$\beta_1$ & $\beta_2$ & $u^*(\beta_1,\beta_2)$ & $e^{2\beta_1}$ & $1-e^{-2(\beta_1+3\beta_2)}$ \\
\hline \hline \\
$-2$ & $1$ & $0.01802$ & $0.01832$  & \\
$-2$ & $2$ & $0.01806$ & $0.01832$ & \\
$0$ & $1$ & $0.99745$ & & $0.99752$ \\
$2$ & $1$ & $0.99995$ & & $0.99995$ \\ \\
\end{tabular}
\end{center}
\caption{Asymptotic comparison for ``attractive'' edge-triangle model near degeneracy.} \label{table1}
\end{table}

\begin{theorem}
\label{2generic}
Consider an ``attractive'' $2$-parameter exponential random graph model (\ref{2pmf}) where $\beta_2\geq 0$. For large $n$ and $(\beta_1,\beta_2)$ sufficiently far away from the origin, a typical graph drawn from the model looks like an Erd\H{o}s-R\'{e}nyi graph $G(n, u^*)$, where the edge presence probability $u^*(\beta_1,\beta_2)$ satisfies:
\begin{itemize}
\item $u^* \asymp e^{2\beta_1}$ if $\beta_1\leq -\beta_2$,
\item $u^* \asymp 1-e^{-2(\beta_1+p\beta_2)}$ if $\beta_1>-\beta_2$.
\end{itemize}
\end{theorem}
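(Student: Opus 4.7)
The analysis rests on the first-order condition (\ref{sd}), which I rewrite in exponential form as
\begin{equation*}
\frac{u^*}{1-u^*}=\exp\bigl(2\beta_1+2p\beta_2(u^*)^{p-1}\bigr),
\end{equation*}
combined with the qualitative selection result recalled just before the theorem statement: as $(\beta_1,\beta_2)$ diverges, the global maximizer $u^*$ of (\ref{reduce}) tends to $0$ throughout the half-plane $\beta_1\le-\beta_2$ and to $1$ throughout $\beta_1>-\beta_2$. The plan is to bootstrap these qualitative statements into the sharp rates asserted.

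\emph{Sparse regime} ($\beta_1\le -\beta_2$). Because $\beta_2(u^*)^{p-1}\ge 0$, the self-consistency equation immediately yields $u^*\ge (1-u^*)e^{2\beta_1}$, hence $u^*\gtrsim e^{2\beta_1}$ once $u^*\to 0$ makes $1-u^*\ge \tfrac12$. For the matching upper bound I rescale $v:=u^*e^{-2\beta_1}$, turning the equation into
\begin{equation*}
\log v=2p\beta_2\,e^{2(p-1)\beta_1}v^{p-1}+\log\bigl(1-v\,e^{2\beta_1}\bigr).
\end{equation*}
In this regime $\beta_2\le|\beta_1|$ and $p\ge 2$, so the prefactor $2p\beta_2 e^{2(p-1)\beta_1}$ is bounded above by $2p|\beta_1|e^{2(p-1)\beta_1}$, which tends to $0$. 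Provided $v$ stays bounded, both terms on the right then vanish, forcing $v\to 1$ and thus $u^*\sim e^{2\beta_1}$. Boundedness of $v$ is where the qualitative convergence $u^*\to 0$ from \cite{YRF} is crucial: any unbounded branch of the rescaled equation would correspond to a competing critical point of the variational problem (\ref{reduce}) whose position stays bounded away from $0$---e.g.\ in the edge-triangle case $p=2$ with $\beta_1=-\beta_2=-N$, such a branch is pinned near $u=\tfrac{1}{2}$---contradicting the selection of the global maximizer.

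\emph{Dense regime} ($\beta_1>-\beta_2$). I reduce to the previous analysis via $w^*:=1-u^*\to 0$. Taking logarithms in the self-consistency equation and expanding $(1-w^*)^{p-1}=1-(p-1)w^*+O((w^*)^2)$ gives
\begin{equation*}
\log\frac{w^*}{1-w^*}=-2(\beta_1+p\beta_2)+O(\beta_2 w^*).
\end{equation*}
Since $\beta_2\ge 0$ and $\beta_1+\beta_2>0$, one has $\beta_1+p\beta_2>(p-1)\beta_2$, and a quick case check (separating $\beta_2$ bounded versus $\beta_2\to\infty$) shows $\beta_1+p\beta_2\to+\infty$ whenever $(\beta_1,\beta_2)$ diverges past the line $\beta_1+\beta_2=0$. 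The same bootstrap---rescaling $z:=w^*e^{2(\beta_1+p\beta_2)}$ and verifying $\beta_2 w^*\to 0$ by plugging $w^*\lesssim e^{-2(\beta_1+p\beta_2)}$ back in---delivers $z\to 1$ and hence $w^*\sim e^{-2(\beta_1+p\beta_2)}$.

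\emph{Main obstacle.} The principal difficulty is not the asymptotic analysis of (\ref{sd}) itself but identifying the correct root: the transcendental self-consistency equation admits multiple critical points in the relevant parameter region, and only the qualitative convergence result of \cite{YRF} singles out the branch hosting the global maximizer. Once that branch is isolated, the bootstrap gives the sharp rate essentially by inspection. A secondary technicality is uniformity of the remainder terms as $(\beta_1,\beta_2)$ diverges along arbitrary paths within the respective half-planes, not merely along the radial lines $\beta_1=a\beta_2$ originally treated in \cite{YRF}; this requires controlling $\beta_2(u^*)^{p-1}$ (resp.\ $\beta_2 w^*$) uniformly, which the decay rates above ensure.
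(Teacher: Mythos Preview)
Your overall strategy matches the paper's: take the qualitative input $u^*\to 0$ (resp.\ $1$) from \cite{YRF} and bootstrap through the first-order condition (\ref{sd}) to extract the sharp rate. The gap is in the step you flag yourself---showing the rescaled variable $v=u^*e^{-2\beta_1}$ stays bounded (equivalently, that $\beta_2(u^*)^{p-1}\to 0$). Your justification, that an unbounded-$v$ branch ``would correspond to a competing critical point whose position stays bounded away from $0$'', is not actually established: a priori one cannot rule out a root with $u\to 0$ yet $ue^{-2\beta_1}\to\infty$ from the branch picture alone, and in any case proving that the unbounded branches sit away from $0$ is exactly as hard as the original claim. (Incidentally, the triangle has $p=3$, not $p=2$, and the competing root along $\beta_1=-\beta_2$ sits near $p^{-1/(p-1)}=1/\sqrt{3}$, not $1/2$.)

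The paper closes this gap with a direct two-step bootstrap that uses only $u^*\to 0$ and the constraint $\beta_1\le-\beta_2$, never the global branch structure. Fix $\epsilon<1/p$; once $(u^*)^{p-1}<\epsilon$, the equation and $1-u^*\le 1$ give
\[
u^*\le e^{2\beta_1+2p\beta_2\epsilon}\le e^{2\beta_2(-1+p\epsilon)},
\]
which is already exponentially small in $\beta_2$. Feeding this back yields $2p\beta_2(u^*)^{p-1}\le 2p\beta_2 e^{2(p-1)\beta_2(-1+p\epsilon)}\to 0$, and then $u^*/e^{2\beta_1}\to 1$ follows immediately. The dense case is handled the same way after writing $v^*=1-u^*$ and observing that the quadratic and higher terms in $(1-v^*)^{p-1}$ carry the right sign to give the one-sided inequality $v^*\le e^{-2(\beta_1+p\beta_2)+2p(p-1)\beta_2\epsilon}$; combined with $\beta_1>-\beta_2$ this gives $v^*\le e^{2(p-1)\beta_2(-1+p\epsilon)}$ and hence $\beta_2 v^*\to 0$. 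Your dense-regime sketch (``plug $w^*\lesssim e^{-2(\beta_1+p\beta_2)}$ back in'') has the same circularity and is repaired by the same device.
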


\begin{proof}
As explained in the previous paragraph, in the large $n$ limit, the asymptotic edge presence probability $u^*(\beta_1,\beta_2)$ of a typical sampled graph is prescribed according to the maximization problem (\ref{sd}). For $(\beta_1,\beta_2)$ whose magnitude is sufficiently big, $u^* \rightarrow 0$ when $\beta_1\leq -\beta_2$ and $u^* \rightarrow 1$ when $\beta_1>-\beta_2$.

For $\beta_1\leq -\beta_2$, we rewrite (\ref{sd}) in the following way:
\begin{equation}
\frac{u^*}{e^{2\beta_1+2\beta_2p(u^*)^{p-1}}}=1-u^*.
\end{equation}
Take $0<\epsilon<1/p$, since $u^* \rightarrow 0$, for $(\beta_1,\beta_2)$ sufficiently far away from the origin, $(u^*)^{p-1}<\epsilon$. Using $1-u^*\leq 1$, we then have
\begin{equation}
u^*\leq e^{2\beta_1+2\beta_2p\epsilon}\leq e^{2\beta_2(-1+p\epsilon)}.
\end{equation}
This implies that
\begin{equation}
2\beta_2p(u^*)^{p-1}\leq 2\beta_2p e^{2\beta_2(p-1)(-1+p\epsilon)} \rightarrow 0
\end{equation}
as $\beta_2$ gets sufficiently large. Using $1-u^* \rightarrow 1$ again, this further shows that $u^*$ asymptotically behaves like $e^{2\beta_1}$.

\begin{figure}
\centering
\includegraphics[width=3in]{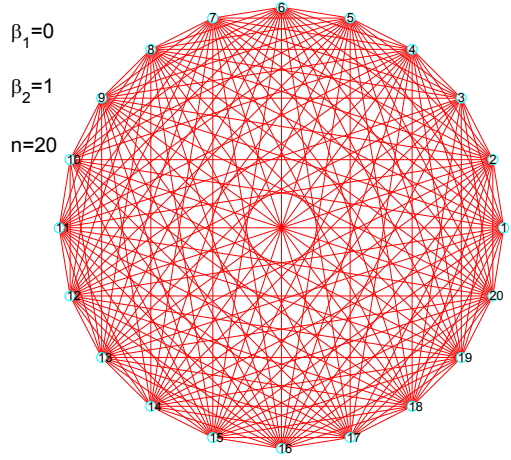}
\caption{A simulated realization of the exponential random graph model on $20$ nodes with edges and triangles as sufficient statistics, where $\beta_1=0$ and $\beta_2=1$. The simulated graph displays Erd\H{o}s-R\'{e}nyi feature with edge density $0.995$, matching the asymptotic predictions of Theorem \ref{2generic}. (cf. Table \ref{table1}: $u^*(\beta_1,\beta_2)=0.99745$ and $1-e^{-2(\beta_1+3\beta_2)}=0.99752$).}
\label{figure1}
\end{figure}

For $\beta_1>-\beta_2$, we rewrite (\ref{sd}) in the following way:
\begin{equation}
\frac{v^*}{e^{-2\beta_1-2\beta_2p(1-v^*)^{p-1}}}=1-v^*, \end{equation}
where $v^*=1-u^* \rightarrow 0$. Going one step further, we separate $1-(p-1)v^*$ from $(1-v^*)^{p-1}$:
\begin{equation}
\label{exp}
\frac{v^*}{e^{-2\beta_1-2\beta_2p+2\beta_2p(p-1)v^*}}=(1-v^*)e^{-2\beta_2p\sum_{s=2}^{p-1}{\binom {p-1}{s}}(-v^*)^s} \leq 1-v^*, \end{equation}
as the dominating term in the exponent $-2\beta_2p{\binom {p-1}{2}}(-v^*)^2$ carries a negative sign. Take $0<\epsilon<1/p$, since $v^* \rightarrow 0$, for $(\beta_1,\beta_2)$ sufficiently far away from the origin, $v^*<\epsilon$. Using $1-v^*\leq 1$, we then have
\begin{equation}
v^*\leq e^{-2\beta_1-2\beta_2p+2\beta_2p(p-1)\epsilon}\leq e^{2\beta_2(1-p+p(p-1)\epsilon)}=e^{2\beta_2(p-1)(-1+p\epsilon)}.
\end{equation}
This implies that
\begin{equation}
2\beta_2p(p-1)v^*\leq 2\beta_2p(p-1) e^{2\beta_2(p-1)(-1+p\epsilon)} \rightarrow 0 \end{equation}
as $\beta_2$ gets sufficiently large, and since $v^* \rightarrow 0$, also implies that the sum of all the terms in the exponent $-2\beta_2p\sum_{s=2}^{p-1}{\binom {p-1}{s}}(-v^*)^s \rightarrow 0$. Using $1-v^* \rightarrow 1$ again, this further shows that $v^*$ asymptotically behaves like $e^{-2(\beta_1+p\beta_2)}$, or equivalently, $u^*$ asymptotically behaves like $1-e^{-2(\beta_1+p\beta_2)}$.
\end{proof}

\begin{figure}
\centering
\includegraphics[width=4in]{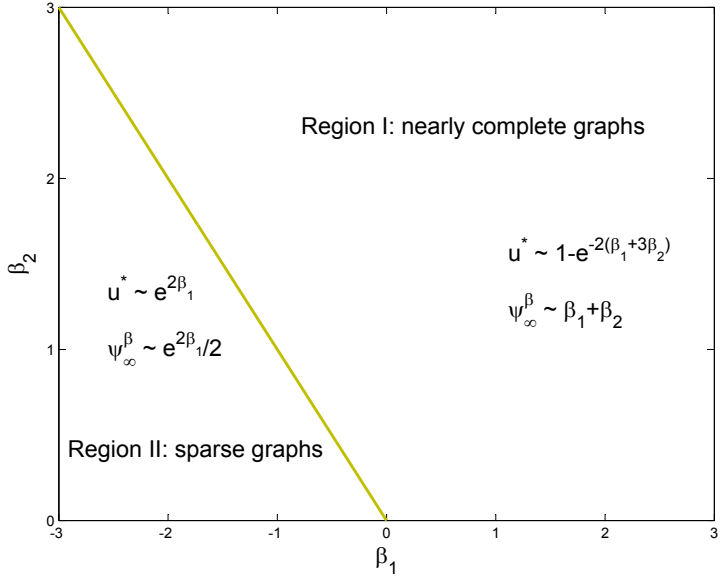}
\caption{Asymptotic tendency in ``attractive'' edge-triangle model.} \label{triangle}
\end{figure}

In the edge-(single)-star model where $H_2$ is a star with $p$ edges, due to the unique structure of stars, maximizers of the variational problem for the limiting normalization constant $\psi_\infty^\beta$ when the parameter $\beta_2<0$ again satisfies (\ref{sd}), and the near degeneracy predictions in Theorem \ref{2generic} may be extended from the upper half-plane to the lower half-plane. It was shown in \cite{YZ1} that for large $n$ and $(\beta_1,\beta_2)$ sufficiently far away from the origin, a typical graph drawn from the ``repulsive'' edge-(single)-star model where $\beta_2<0$ is indistinguishable from an Erd\H{o}s-R\'{e}nyi graph $G(n, u^*)$, where the edge presence probability $u^* \rightarrow 0$ when $\beta_1< 0$ and $u^* \rightarrow 1$ when $\beta_1> -p\beta_2$. As regards the speed of $u^*$ towards these two degenerate states, simulation results suggest that just as in the ``attractive'' situation, $u^*$ is asymptotically $e^{2\beta_1}$ in the sparse case and is asymptotically $1-e^{-2(\beta_1+p\beta_2)}$ in the nearly complete case. See Table \ref{table2}. Even for $\beta$ with small magnitude, the asymptotic tendency of $u^*$ is quite evident.

\begin{theorem}
\label{singlestar}
Consider a ``repulsive'' edge-(single)-star model obtained by taking $H_2$ a star with $p$ edges and $\beta_2<0$ in (\ref{2pmf}). For large $n$ and $(\beta_1,\beta_2)$ sufficiently far away from the origin, a typical graph drawn from the model looks like an Erd\H{o}s-R\'{e}nyi graph $G(n, u^*)$, where the edge presence probability $u^*(\beta_1,\beta_2)$ satisfies:
\begin{itemize}
\item $u^* \asymp e^{2\beta_1}$ if $\beta_1<0$,
\item $u^* \asymp 1-e^{-2(\beta_1+p\beta_2)}$ if $\beta_1>-p\beta_2$.
\end{itemize}
\end{theorem}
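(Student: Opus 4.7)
The plan is to parallel the argument in the proof of Theorem \ref{2generic}, invoking the fact (established in \cite{YZ1}) that the self-consistency equation (\ref{sd}) continues to characterize $u^*$ for the edge-(single)-star model even when $\beta_2<0$, and that the two limiting regimes $u^*\to 0$ for $\beta_1<0$ and $u^*\to 1$ for $\beta_1>-p\beta_2$ are already known to hold there. What remains is to extract the explicit asymptotic rates from (\ref{sd}).

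For the sparse regime $\beta_1<0$, I would first rewrite (\ref{sd}) as
$$u^*=(1-u^*)\,e^{2\beta_1+2\beta_2 p(u^*)^{p-1}}.$$
Since $\beta_2<0$, the factor $e^{2\beta_2 p(u^*)^{p-1}}\leq 1$ delivers the crude upper bound $u^*\leq e^{2\beta_1}$ immediately, with no need for the bootstrap used in the ``attractive'' case. Substituting this bound back into the exponent gives $|2\beta_2 p(u^*)^{p-1}|\leq 2|\beta_2|p\,e^{2\beta_1(p-1)}$, which vanishes whenever the parameters diverge with $\beta_1\to-\infty$ outrunning any growth of $|\beta_2|$. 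Combined with $1-u^*\to 1$, this yields $u^*\asymp e^{2\beta_1}$.

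For the nearly complete regime $\beta_1>-p\beta_2$, set $v^*=1-u^*$ and rewrite (\ref{sd}) as
$$v^*=(1-v^*)\,e^{-2\beta_1-2\beta_2 p(1-v^*)^{p-1}}.$$
Because $\beta_2<0$ and $(1-v^*)^{p-1}\leq 1$, the exponent satisfies $-2\beta_2 p(1-v^*)^{p-1}\leq -2\beta_2 p$, producing the initial upper bound $v^*\leq e^{-2(\beta_1+p\beta_2)}$. Expanding as in (\ref{exp}),
$$v^*=(1-v^*)\,e^{-2(\beta_1+p\beta_2)+2\beta_2 p(p-1)v^*-2\beta_2 p\sum_{s=2}^{p-1}\binom{p-1}{s}(-v^*)^s},$$
I would then plug this crude upper bound on $v^*$ back into the correction exponent; the leading correction $2\beta_2 p(p-1)v^*$ and the higher-order tail are each bounded in absolute value by a constant multiple of $|\beta_2|\,e^{-2(\beta_1+p\beta_2)}$, which vanishes provided $\beta_1+p\beta_2$ grows large enough to dominate $\log|\beta_2|$. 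Hence $v^*\asymp e^{-2(\beta_1+p\beta_2)}$, equivalently $u^*\asymp 1-e^{-2(\beta_1+p\beta_2)}$.

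The main obstacle is the reversed sign structure in the nearly complete case: in the ``attractive'' setting of Theorem \ref{2generic} the dominating correction $-2\beta_2 p\binom{p-1}{2}(v^*)^2$ is negative, permitting the neat one-sided inequality $\leq 1-v^*$ exploited in (\ref{exp}); with $\beta_2<0$ that same term flips sign, so the single-shot estimate fails and must be replaced by the two-step bootstrap above, in which one first obtains a crude upper bound on $v^*$ using the monotone inequality $(1-v^*)^{p-1}\leq 1$ and then refines it by substituting the crude bound back into the exponent.
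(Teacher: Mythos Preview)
Your proposal is correct and follows essentially the same trajectory as the paper's proof: obtain the crude bound $u^*\le e^{2\beta_1}$ (resp.\ $v^*\le e^{-2(\beta_1+p\beta_2)}$) from the sign of $\beta_2$, then feed it back to kill the correction terms. Two small points are worth noting.

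First, the ``obstacle'' you flag in the nearly complete case is handled in the paper by a different split of $(1-v^*)^{p-1}$: rather than separating $1-(p-1)v^*$ as in Theorem~\ref{2generic}, the paper separates only the constant $1$, so the sum on the right starts at $s=1$. The dominating term is then $2\beta_2 p(p-1)v^*$, which for $\beta_2<0$ is negative, and the one-sided inequality $\le 1-v^*$ is again immediate. Your route via $(1-v^*)^{p-1}\le 1$ yields the identical crude bound, so the ``two-step bootstrap'' and the paper's single inequality are really the same estimate in different clothing.

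Second, your informal divergence conditions (``$\beta_1\to-\infty$ outrunning $|\beta_2|$'', ``$\beta_1+p\beta_2$ dominating $\log|\beta_2|$'') should be formalized; the paper does this by fixing an arbitrary $\delta_1>0$ with $\beta_1\le\delta_1\beta_2$ in the sparse case and an arbitrary $\delta_2>0$ with $\beta_1\ge -(p+\delta_2)\beta_2$ in the dense case, which cleanly makes $|\beta_2|\,e^{2\delta_1\beta_2(p-1)}\to 0$ and $|\beta_2|\,e^{2\delta_2\beta_2}\to 0$ as $\beta_2\to-\infty$.
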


\begin{proof}
For $(\beta_1,\beta_2)$ whose magnitude is sufficiently big, we examine the maximization problem (\ref{sd}) separately when $\beta_1<0$ and when $\beta_1>-p\beta_2$.

First for $\beta_1<0$. Assume that $\beta_1\leq \delta_1 \beta_2$ for some fixed but arbitrary $\delta_1>0$. We rewrite (\ref{sd}) in the following way:
\begin{equation}
\frac{u^*}{e^{2\beta_1}}=(1-u^*)e^{2\beta_2p(u^*)^{p-1}}\leq 1-u^*.
\end{equation}
Using $1-u^*\leq 1$, we then have
\begin{equation}
u^*\leq e^{2\beta_1} \leq e^{2\delta_1\beta_2}.
\end{equation}
This implies that
\begin{equation}
2|\beta_2|p(u^*)^{p-1}\leq 2|\beta_2|pe^{2\delta_1\beta_2(p-1)} \rightarrow 0
\end{equation}
as $\beta_2$ gets sufficiently negative. Using $1-u^* \rightarrow 1$ again, this further shows that $u^*$ asymptotically behaves like $e^{2\beta_1}$.

\begin{table}
\begin{center}
\begin{tabular}{ccccc}
$\beta_1$ & $\beta_2$ & $u^*(\beta_1,\beta_2)$ & $e^{2\beta_1}$ & $1-e^{-2(\beta_1+2\beta_2)}$ \\
\hline \hline \\
$-2$ & $-4$ & $0.01435$ & 0.01832 & \\

$-2$ & $-2$ & $0.01588$ & $0.01832$ & \\

$-3$ & $1$ & $0.00250$ & $0.00248$ & \\

$-3$ & $3$ & $0.00255$ & $0.00248$ & \\

$1$ & $1$ & $0.99750$ & & $0.99752$ \\

$4$ & $-1$ & $0.98317$ & & $0.98168$ \\ \\
\end{tabular}
\end{center}
\caption{Asymptotic comparison for edge-$2$-star model near degeneracy.} \label{table2}
\end{table}

Next for $\beta_1>-p\beta_2$. Assume that $\beta_1\geq -(p+\delta_2)\beta_2$ for some fixed but arbitrary $\delta_2>0$. We rewrite (\ref{sd}) in the following way:
\begin{equation}
\frac{v^*}{e^{-2\beta_1-2\beta_2p(1-v^*)^{p-1}}}=1-v^*,
\end{equation}
where $v^*=1-u^* \rightarrow 0$. Going one step further, we separate $1$ from $(1-v^*)^{p-1}$:
\begin{equation}
\frac{v^*}{e^{-2\beta_1-2\beta_2p}}=(1-v^*)e^{-2\beta_2p\sum_{s=1}^{p-1}{\binom {p-1}{s}}(-v^*)^s} \leq 1-v^*,
\end{equation}
as the dominating term in the exponent $2\beta_2p(p-1)v^*$ carries a negative sign. Using $1-v^*\leq 1$, we then have
\begin{equation}
v^*\leq e^{-2\beta_1-2\beta_2p}\leq e^{2\delta_2\beta_2}.
\end{equation}
This implies that
\begin{equation}
2|\beta_2|p(p-1)v^*\leq 2|\beta_2|p(p-1)e^{2\delta_2\beta_2}\rightarrow 0
\end{equation}
as $\beta_2$ gets sufficiently negative, and since $v^* \rightarrow 0$, also implies that the sum of all the terms in the exponent $-2\beta_2p\sum_{s=1}^{p-1}{\binom {p-1}{s}}(-v^*)^s \rightarrow 0$. Using $1-v^* \rightarrow 1$ again, this further shows that $v^*$ asymptotically behaves like $e^{-2(\beta_1+p\beta_2)}$, or equivalently, $u^*$ asymptotically behaves like $1-e^{-2(\beta_1+p\beta_2)}$.
\end{proof}

Though the $2$-parameter exponential random graph $G_n$ looks like an Erd\H{o}s-R\'{e}nyi random graph $G(n, u^*)$ in the large $n$ limit, we also note some marked dissimilarities. The limiting normalization constant $\psi_\infty^\beta$ for the $2$-parameter exponential model (\ref{2pmf}) is given by (\ref{reduce}), while the ``equivalent'' Erd\H{o}s-R\'{e}nyi model yields that $\psi_\infty^\beta$ is $-\log(1-u^*)/2$. Since $u^*$ is nonzero for finite $(\beta_1,\beta_2)$ \cite{CD1}, the two different interpretations of the limiting normalization constant indicate that the simplified Erd\H{o}s-R\'{e}nyi graph and the real exponential model are not exact asymptotic analogs in the usual statistical physics sense. When the relevant Erd\H{o}s-R\'{e}nyi graph is near degenerate, Theorems \ref{2generic} and \ref{singlestar} give the asymptotic speed of $u^*$ as a function of $\beta_1$ and $\beta_2$, allowing a deeper analysis of the asymptotics of $\psi_\infty^\beta$ in the following Theorem \ref{norm}. The theorem is formulated in the context of the edge-(single)-star model, since the asymptotics of $u^*$ are known in broader parameter regions for this model, but the statement for the ``attractive'' situation ($\beta_2\geq 0$) applies without modification to generic $2$-parameter models. See Figures \ref{triangle} and \ref{star}. We also note that, in the sparse region, the Erd\H{o}s-R\'{e}nyi graph seems to reflect the asymptotic tendency of the exponential random graph more accurately, as the two interpretations of the limiting normalization constant do coincide when the parameters diverge.

\begin{figure}
\centering
\includegraphics[width=4in]{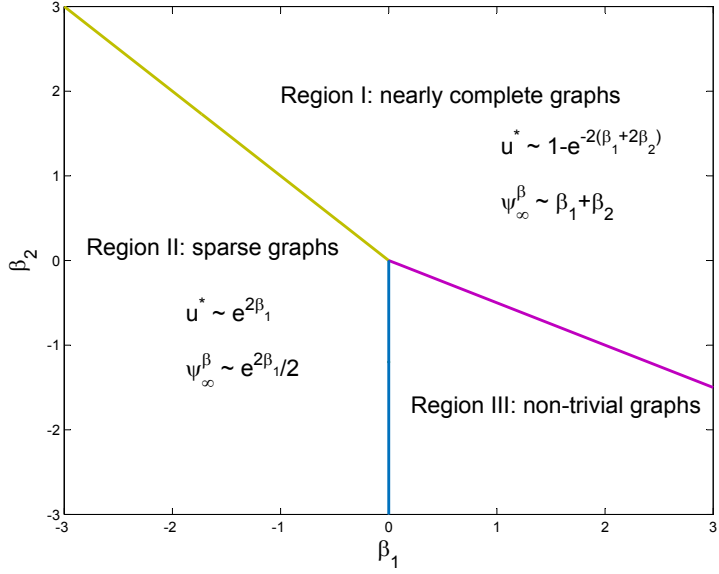}
\caption{Asymptotic tendency in edge-$2$-star model.} \label{star}
\end{figure}

\begin{theorem}
\label{norm}
Consider an edge-(single)-star model obtained by taking $H_2$ a star with $p$ edges in (\ref{2pmf}). For $(\beta_1,\beta_2)$ sufficiently far away from the origin, the limiting normalization constant $\psi_\infty^\beta$ satisfies:
\begin{itemize}
\item $\psi_\infty^\beta \asymp e^{2\beta_1}/2$ if $\beta_1\leq -\beta_2$ and $\beta_1<0$,
\item $\psi_\infty^\beta \asymp \beta_1+\beta_2$ if $\beta_1>-\beta_2$ and $\beta_1>-p\beta_2$.
\end{itemize}
\end{theorem}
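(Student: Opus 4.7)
The approach is to reduce the variational formula (\ref{reduce}) at $u^*$ to a more tractable form using the self-consistency equation (\ref{sd}). Writing $\psi_\infty^\beta = \beta_1 u^* + \beta_2 (u^*)^p - \tfrac{1}{2}u^*\log u^* - \tfrac{1}{2}(1-u^*)\log(1-u^*)$ and multiplying (\ref{sd}) through by $u^*$, one obtains $\beta_1 u^* + \beta_2 p (u^*)^p = \tfrac{1}{2}u^*\log u^* - \tfrac{1}{2}u^*\log(1-u^*)$. Substituting this back cancels the $\log u^*$ contribution and collapses the two $\log(1-u^*)$ pieces into a single $-\tfrac{1}{2}\log(1-u^*)$, yielding the clean identity
$$\psi_\infty^\beta = -\tfrac{1}{2}\log(1-u^*) - (p-1)\beta_2 (u^*)^p.$$
This is the central formula; the rest of the proof is asymptotic evaluation using the estimates on $u^*$ supplied by Theorems \ref{2generic} and \ref{singlestar}.

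In the sparse regime, $u^* \asymp e^{2\beta_1} \to 0$, so a Taylor expansion gives $-\tfrac{1}{2}\log(1-u^*) \sim \tfrac{1}{2}u^* \sim \tfrac{1}{2}e^{2\beta_1}$, and the correction $(p-1)\beta_2(u^*)^p$ has ratio to the leading term of order $|\beta_2|\, e^{2(p-1)\beta_1}$. In the attractive case the constraint $\beta_1 \leq -\beta_2$ gives $|\beta_1| \geq |\beta_2|$, while in the repulsive case the stronger hypothesis $\beta_1 \leq \delta_1 \beta_2$ used in the proof of Theorem \ref{singlestar} gives $|\beta_1| \geq \delta_1|\beta_2|$; in either case the exponential factor $e^{2(p-1)\beta_1}$ decays faster than any polynomial in $|\beta_2|$, so the correction vanishes and $\psi_\infty^\beta \sim \tfrac{1}{2}e^{2\beta_1}$. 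In the dense regime, set $v^* = 1 - u^* \asymp e^{-2(\beta_1+p\beta_2)} \to 0$; then $-\tfrac{1}{2}\log(1-u^*) = \beta_1 + p\beta_2 + o(1)$, while $(u^*)^p = 1 + O(v^*) \to 1$ so $-(p-1)\beta_2(u^*)^p = -(p-1)\beta_2 + o(1)$. Summing, $\psi_\infty^\beta = (\beta_1 + p\beta_2) - (p-1)\beta_2 + o(1) = \beta_1 + \beta_2 + o(1)$.

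The main obstacle is interpretive rather than technical: the ratio-asymptotic claim $\psi_\infty^\beta \asymp \beta_1 + \beta_2$ in the dense regime requires the $o(1)$ error to be dominated by the leading term $\beta_1 + \beta_2$, which holds provided $(\beta_1, \beta_2) \to \infty$ along a trajectory for which $\beta_1 + \beta_2 \to \infty$ rather than merely $\|(\beta_1,\beta_2)\| \to \infty$. Once this reading of ``sufficiently far from the origin'' is fixed, the argument in both regimes reduces to the elementary exponential-versus-polynomial comparisons already sketched.
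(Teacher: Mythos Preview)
Your proof is correct and follows essentially the same route as the paper: both derive the identity $\psi_\infty^\beta = -\tfrac{1}{2}\log(1-u^*) - (p-1)\beta_2(u^*)^p$ by substituting the stationarity condition (\ref{sd}) into (\ref{reduce}), and then evaluate each term using the asymptotics $u^*\asymp e^{2\beta_1}$ (sparse) and $1-u^*\asymp e^{-2(\beta_1+p\beta_2)}$ (dense) together with the estimates $\beta_2(u^*)^{p-1}\to 0$ and $\beta_2 v^*\to 0$ established in Theorems \ref{2generic} and \ref{singlestar}. Your closing remark about needing $\beta_1+\beta_2\to\infty$ for the ratio asymptotic in the dense regime is a valid refinement the paper leaves implicit.
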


\begin{proof}
For $(\beta_1,\beta_2)$ whose magnitude is sufficiently big, we examine the limiting normalization constant (\ref{reduce}) separately in the sparse region and in the nearly complete region.

In the sparse region ($\beta_1\leq -\beta_2$ and $\beta_1<0$),
\begin{eqnarray}
\psi_\infty^\beta&=&\beta_1u^*+\beta_2(u^*)^p-\frac{1}{2}u^*\log u^*-\frac{1}{2}(1-u^*)\log(1-u^*)\\
&=&\beta_2(u^*)^p(1-p)-\frac{1}{2}\log(1-u^*)\notag.
\end{eqnarray}
From Theorems \ref{2generic} and \ref{singlestar}, $\beta_2(u^*)^{p-1}\rightarrow 0$ and $-\log(1-u^*)\asymp u^* \asymp e^{2\beta_1}$. This shows that $\psi_\infty^\beta$ asymptotically behaves like $e^{2\beta_1}/2$.

In the nearly complete region ($\beta_1>-\beta_2$ and $\beta_1>-p\beta_2$),
\begin{eqnarray}
\psi_\infty^\beta&=&\beta_1(1-v^*)+\beta_2(1-v^*)^p-\frac{1}{2}v^*\log v^*-\frac{1}{2}(1-v^*)\log(1-v^*)\\
&=&\beta_2(1-v^*)^p(1-p)-\frac{1}{2}\log v^*\notag.
\end{eqnarray}
From Theorems \ref{2generic} and \ref{singlestar}, $\beta_2v^* \rightarrow 0$ and $v^* \asymp e^{-2(\beta_1+p\beta_2)}$. This shows that $\psi_\infty^\beta$ asymptotically behaves like $\beta_1+\beta_2$.
\end{proof}

We may also analyze the asymptotics of $\psi_\infty^\beta$ along the boundaries of the near degenerate region. The boundary of the sparse region is given by $\beta_1=0$ and $\beta_2<0$, and $u^*$ satisfies
\begin{equation}
\beta_2p(u^*)^{p-1}=\frac{1}{2}\log\left(\frac{u^*}{1-u^*}\right).
\end{equation}
Though $u^* \rightarrow 0$ depends on $\beta_2$ in a complicated way, the asymptotic behavior of $\psi_\infty^\beta$ can be characterized:
\begin{equation}
\psi_\infty^\beta=\frac{1-p}{2p}u^*\log\frac{u^*}{1-u^*}-\frac{1}{2}\log(1-u^*).
\end{equation}
Using $\log(1-u^*)\asymp -u^*$, this shows that $\psi_\infty^\beta$ asymptotically behaves like $(1-p)u^*\log u^*/(2p)$. We recognize that the asymptotic behaviors of $\psi_\infty^\beta$ on the boundary of and inside the sparse region are different: Inside, $\psi_\infty^\beta$ is asymptotically $u^*/2$ and converges to $0$, while on the boundary, $\psi_\infty^\beta$ though also converges to $0$ is at a much slower rate.

The boundary of the nearly complete region is given by $\beta_1=-p\beta_2$ and $\beta_2<0$, and $u^*$ satisfies
\begin{equation}
\label{comp}
-\beta_2p+\beta_2p(1-v^*)^{p-1}=-\frac{1}{2}\log\left(\frac{v^*}{1-v^*}\right).
\end{equation}
Though $v^*=1-u^*\rightarrow 0$ depends on $\beta_2$ in a complicated way, the asymptotic behavior of $\psi_\infty^\beta$ can be characterized:
\begin{equation}
\psi_\infty^\beta=\beta_2(1-p)(1-v^*)-\frac{1-p}{2p}(1-v^*)\log\left(\frac{v^*}{1-v^*}\right)-\frac{1}{2}\log v^*.
\end{equation}
Since the dominating term on the left of (\ref{comp}) is $-\beta_2p(p-1)v^*$, using $\log(1-v^*)\rightarrow 0$, we then have $\beta_2p(p-1)v^*\asymp \log v^*/2$, which shows that $|\beta_2|$ is asymptotically larger compared with $|\log v^*|$ and further shows that $\psi_\infty^\beta$ asymptotically behaves like $\beta_2(1-p)$. We recognize that the asymptotic behaviors of $\psi_\infty^\beta$ on the boundary of and inside the nearly complete region coincide.

\section{Further discussion}
\label{discuss}
This section extends the investigation into near degeneracy from generic $2$-parameter exponential random graph models to generic $k$-parameter exponential random graph models. For ``attractive'' models where $\beta_2,\ldots,\beta_k\geq 0$, we derive parallel results concerning the asymptotic graph structure and the limiting normalization constant. Using similar methods, more results can be deduced for the ``repulsive'' edge-(multiple)-star model where $\beta_2,\ldots,\beta_k<0$. As illustrated in Chatterjee and Diaconis \cite{CD1}, when $n$ is large and $\beta_2,\ldots,\beta_k$ are non-negative, a typical graph drawn from the $k$-parameter exponential model behaves like an Erd\H{o}s-R\'{e}nyi graph $G(n, u^*)$, where the edge presence probability $u^*(\beta_1,\ldots,\beta_k)$ is picked randomly from the set of maximizers of (\ref{max}), and thus satisfies
\begin{equation}
\label{kreduce}
\beta_1+\beta_2e(H_2)(u^*)^{e(H_2)-1}+\cdots+\beta_ke(H_k)(u^*)^{e(H_k)-1}=\frac{1}{2}\log\left(\frac{u^*}{1-u^*}\right),
\end{equation}
where $e(H_i)$ is the number of edges in $H_i$. We take $H_1$ to be an edge and assume without loss of generality that $1=e(H_1)\leq \cdots\leq e(H_k)$.

\begin{theorem}
\label{generic}
Consider an ``attractive'' $k$-parameter exponential random graph model (\ref{pmf}) where $\beta_2,\ldots,\beta_k\geq 0$. For large $n$ and $(\beta_1,\ldots,\beta_k)$ sufficiently far away from the origin, a typical graph drawn from the model looks like an Erd\H{o}s-R\'{e}nyi graph $G(n, u^*)$, where the edge presence probability $u^*(\beta_1,\ldots,\beta_k)$ satisfies:
\begin{itemize}
\item $u^* \asymp e^{2\beta_1}$ if $\sum_{i=1}^k \beta_i\leq 0$,
\item $u^* \asymp 1-e^{\sum_{i=1}^k -2\beta_ie(H_i)}$ if $\sum_{i=1}^k \beta_i>0$.
\end{itemize}
\end{theorem}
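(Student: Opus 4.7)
The plan is to adapt the two-case analysis of Theorem \ref{2generic} to the $k$-parameter setting, using the implicit relation \eqref{kreduce} in place of \eqref{sd} and taking as input the global near degenerate result of \cite{Yin1}: for $(\beta_1,\dots,\beta_k)$ sufficiently far from the origin, $u^*\to 0$ when $\sum_{i=1}^k\beta_i\leq 0$ and $u^*\to 1$ when $\sum_{i=1}^k\beta_i>0$. The role played by $\beta_2 p(u^*)^{p-1}$ in Theorem \ref{2generic} will be taken over by the sum $\sum_{i=2}^k\beta_ie(H_i)(u^*)^{e(H_i)-1}$, and the structural inequalities $\beta_1\leq-\beta_2$ and $\beta_1>-\beta_2$ will be replaced by $\beta_1\leq-\sum_{i=2}^k\beta_i$ and $\beta_1>-\sum_{i=2}^k\beta_i$ respectively.

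In the sparse case I would rewrite \eqref{kreduce} as $u^*=(1-u^*)\exp\bigl(2\beta_1+2\sum_{i=2}^k\beta_ie(H_i)(u^*)^{e(H_i)-1}\bigr)$, fix a threshold $0<\epsilon<1/e(H_k)$, and observe that $u^*\to 0$ together with $e(H_i)\geq 2$ for $i\geq 2$ forces $(u^*)^{e(H_i)-1}<\epsilon$ once the parameters are far enough from the origin. Combining this with $1-u^*\leq 1$ and $\beta_1\leq-\sum_{i=2}^k\beta_i$ then yields $u^*\leq\exp\bigl(-2\sum_{i=2}^k\beta_i(1-e(H_i)\epsilon)\bigr)$, in which each coefficient $1-e(H_i)\epsilon$ is strictly positive by the choice of $\epsilon$. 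Consequently the correction $\sum_{i=2}^k\beta_ie(H_i)(u^*)^{e(H_i)-1}$ admits a bound that is polynomial in $\beta$ times an exponential that tends to zero, whether $\sum_{i=2}^k\beta_i\to\infty$ or $\beta_1\to-\infty$ with the remaining parameters bounded, and $u^*\asymp e^{2\beta_1}$ follows by reinserting this into the identity.

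For the nearly complete case I would set $v^*=1-u^*\to 0$, binomially expand $(1-v^*)^{e(H_i)-1}$ for each $i\geq 2$, and absorb the linear part $1-(e(H_i)-1)v^*$ into the exponent exactly as in Theorem \ref{2generic}. The residual factor $\exp\bigl(-2\sum_{i=2}^k\beta_ie(H_i)\sum_{s=2}^{e(H_i)-1}\binom{e(H_i)-1}{s}(-v^*)^s\bigr)$ is at most $1$ for small $v^*$ because the dominating $s=2$ contribution carries a non-negative sign. Using $1-v^*\leq 1$, the hypothesis $\beta_1>-\sum_{i=2}^k\beta_i$, and the same threshold $\epsilon<1/e(H_k)$ gives $v^*\leq\exp\bigl(-2\sum_{i=2}^k\beta_i(e(H_i)-1)(1-e(H_i)\epsilon)\bigr)$, which ensures that both the linear correction $2v^*\sum_{i=2}^k\beta_ie(H_i)(e(H_i)-1)$ and the higher-order binomial tail in the exponent vanish. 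Reinserting then delivers $v^*\asymp\exp\bigl(-2\sum_{i=1}^k\beta_ie(H_i)\bigr)$, which is the claimed asymptotic for $1-u^*$.

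The main obstacle I anticipate is the bookkeeping needed to verify that every error term vanishes uniformly in the direction along which $(\beta_1,\dots,\beta_k)$ escapes to infinity subject to $\beta_2,\dots,\beta_k\geq 0$. The key quantitative choice is $\epsilon<1/e(H_k)$, which makes each coefficient $1-e(H_i)\epsilon$ strictly positive and so forces a genuine exponential decay of $u^*$ or $v^*$ along whichever linear combination of the $\beta_i$'s is unbounded; multiplication by the polynomial prefactors in $\beta$ arising from the linear correction and the binomial remainder is then harmless. Once this quantitative decay is secured, the remaining manipulations are routine adaptations of the $2$-parameter argument.
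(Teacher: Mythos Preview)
Your proposal is correct and follows essentially the same approach as the paper: the same rewriting of \eqref{kreduce}, the same threshold $0<\epsilon<1/e(H_k)$, the same use of $\beta_1\le-\sum_{i\ge2}\beta_i$ (resp.\ $\beta_1>-\sum_{i\ge2}\beta_i$) to bound $u^*$ (resp.\ $v^*$), and the same binomial separation with the sign of the $s=2$ term controlling the residual. If anything, your remark that the correction term vanishes whether $\sum_{i\ge2}\beta_i\to\infty$ or $\beta_1\to-\infty$ with the other parameters bounded is slightly more careful than the paper, which only explicitly treats the former direction.
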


\begin{proof}
The proof follows a similar line of reasoning as in the proof of Theorem \ref{2generic}. Expectedly though, the argument is more involved since we are working with $k$-parameter families rather than $2$-parameter families.

For $\sum_{i=1}^k \beta_i \leq 0$, we rewrite (\ref{kreduce}) in the following way:
\begin{equation}
\frac{u^*}{e^{2\beta_1+\sum_{i=2}^k 2\beta_ie(H_i)(u^*)^{e(H_i)-1}}}=1-u^*.
\end{equation}
Take $0<\epsilon<1/e(H_k)$, since $u^*\rightarrow 0$, for $(\beta_1,\ldots,\beta_k)$ sufficiently far away from the origin, $(u^*)^{e(H_i)-1}<\epsilon$ for $2\leq i\leq k$. Using $1-u^*\leq 1$, we then have
\begin{equation}
u^*\leq e^{2\beta_1+\sum_{i=2}^k 2\beta_ie(H_i)\epsilon}\leq e^{\sum_{i=2}^k 2\beta_i(-1+e(H_i)\epsilon)}. \end{equation}
This implies that
\begin{equation}
2\beta_je(H_j)(u^*)^{e(H_j)-1}\leq 2\beta_j e(H_j) e^{(e(H_j)-1)\sum_{i=2}^k 2\beta_i(-1+e(H_i)\epsilon)} \rightarrow 0
\end{equation}
for all $2\leq j\leq k$ as $\beta_2,\ldots,\beta_k$ get sufficiently large. Using $1-u^*\rightarrow 0$ again, this further shows that $u^*$ asymptotically behaves like $e^{2\beta_1}$.

For $\sum_{i=1}^k \beta_i>0$, we rewrite (\ref{kreduce}) in the following way:
\begin{equation}
\frac{v^*}{e^{-2\beta_1-\sum_{i=2}^k 2\beta_ie(H_i)(1-v^*)^{e(H_i)-1}}}=1-v^*,
\end{equation}
where $v^*=1-u^*\rightarrow 0$. Going one step further, for $2\leq i\leq k$, we separate $1-(e(H_i)-1)v^*$ from $(1-v^*)^{e(H_i)-1}$:
\begin{multline}
\frac{v^*}{e^{\sum_{i=1}^k -2\beta_ie(H_i)+\sum_{i=2}^k 2\beta_ie(H_i)(e(H_i)-1)v^*}}\\=(1-v^*)e^{-\sum_{i=2}^k 2\beta_i e(H_i)\sum_{s=2}^{e(H_i)-1}\binom {e(H_i)-1}{s}(-v^*)^s}\leq 1-v^*,
\end{multline}
as the dominating term in the exponent $-\sum_{i=2}^k 2\beta_i e(H_i)\binom {e(H_i)-1}{2}(-v^*)^2$ carries a negative sign. Take $0<\epsilon<1/e(H_k)$, since $v^*\rightarrow 0$, for $(\beta_1,\ldots,\beta_k)$ sufficiently far away from the origin, $v^*<\epsilon$. Using $1-v^*\leq 1$, we then have
\begin{multline}
v^*\leq e^{\sum_{i=1}^k -2\beta_ie(H_i)+\sum_{i=2}^k 2\beta_ie(H_i)(e(H_i)-1)\epsilon}\\
\leq e^{\sum_{i=2}^k 2\beta_i(1-e(H_i)+e(H_i)(e(H_i)-1)\epsilon)}=e^{\sum_{i=2}^k 2\beta_i(e(H_i)-1)(-1+e(H_i)\epsilon)}.
\end{multline}
This implies that
\begin{equation}
2\beta_j e(H_j)(e(H_j)-1)v^*\leq 2\beta_j e(H_j)(e(H_j)-1)e^{\sum_{i=2}^k 2\beta_i(e(H_i)-1)(-1+e(H_i)\epsilon)} \rightarrow 0
\end{equation}
for all $2\leq j\leq k$ as $\beta_2,\ldots,\beta_k$ get sufficiently large, and since $v^*\rightarrow 0$, also implies that the sum of all the terms in the exponent\\ $-\sum_{i=2}^k 2\beta_i e(H_i)\sum_{s=2}^{e(H_i)-1}\binom {e(H_i)-1}{s}(-v^*)^s\rightarrow 0$. Using $1-v^*\rightarrow 1$ again, this further shows that $v^*$ asymptotically behaves like $e^{\sum_{i=1}^k -2\beta_ie(H_i)}$, or equivalently, $u^*$ asymptotically behaves like $1-e^{\sum_{i=1}^k -2\beta_ie(H_i)}$.
\end{proof}

\begin{theorem}
\label{inf}
Consider an ``attractive'' $k$-parameter exponential random graph model (\ref{pmf}) where $\beta_2,\ldots,\beta_k\geq 0$. For $(\beta_1,\ldots,\beta_k)$ sufficiently far away from the origin, the limiting normalization constant $\psi_\infty^\beta$ satisfies:
\begin{itemize}
\item $\psi_\infty^\beta \asymp e^{2\beta_1}/2$ if $\sum_{i=1}^k \beta_i \leq 0$,
\item $\psi_\infty^\beta \asymp \sum_{i=1}^k \beta_i$ if $\sum_{i=1}^k \beta_i>0$.
\end{itemize}
\end{theorem}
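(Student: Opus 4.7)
The plan is to follow closely the structure of the proof of Theorem \ref{norm}, replacing the appeal to Theorems \ref{2generic} and \ref{singlestar} by an appeal to Theorem \ref{generic}. The key algebraic preliminary is to use the self-consistency relation (\ref{kreduce}) to collapse the expression (\ref{max}) into a two-term form. Multiplying (\ref{kreduce}) through by $u^*$ gives
\begin{equation*}
\beta_1 u^* + \sum_{i=2}^k \beta_i e(H_i)(u^*)^{e(H_i)} = \tfrac{1}{2} u^* \log u^* - \tfrac{1}{2} u^* \log(1-u^*),
\end{equation*}
and substituting this into (\ref{max}) so as to eliminate $\beta_1 u^*$ and $\tfrac{1}{2}u^*\log u^*$ leaves
\begin{equation*}
\psi_\infty^\beta = \sum_{i=2}^k \beta_i(1-e(H_i))(u^*)^{e(H_i)} - \tfrac{1}{2}\log(1-u^*),
\end{equation*}
which is the direct $k$-parameter analogue of the simplified identity exploited in the proof of Theorem \ref{norm}.

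In the sparse region $\sum_{i=1}^k \beta_i \le 0$, Theorem \ref{generic} yields $u^* \asymp e^{2\beta_1}$, and its proof furnishes the stronger bound $\beta_i (u^*)^{e(H_i)-1} \to 0$ for each $2\le i\le k$. Writing $\beta_i(u^*)^{e(H_i)} = \beta_i(u^*)^{e(H_i)-1}\cdot u^*$, every term in the sum is $o(u^*)$; meanwhile the logarithmic term satisfies $-\tfrac{1}{2}\log(1-u^*) \asymp u^*/2 \asymp e^{2\beta_1}/2$. Adding these yields $\psi_\infty^\beta \asymp e^{2\beta_1}/2$.

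In the nearly complete region $\sum_{i=1}^k \beta_i > 0$, I would set $v^* = 1-u^* \to 0$ and rewrite the identity as
\begin{equation*}
\psi_\infty^\beta = \sum_{i=2}^k \beta_i(1-e(H_i))(1-v^*)^{e(H_i)} - \tfrac{1}{2}\log v^*.
\end{equation*}
From Theorem \ref{generic}, $v^* \asymp \exp\bigl(-\sum_{i=1}^k 2\beta_i e(H_i)\bigr)$, so taking logarithms gives $-\tfrac{1}{2}\log v^* = \beta_1 + \sum_{i=2}^k \beta_i e(H_i) + o(1)$. The proof of Theorem \ref{generic} also supplies $\beta_i v^* \to 0$ for each $i\ge 2$, so expanding $(1-v^*)^{e(H_i)} = 1 + O(v^*)$ we obtain $\beta_i(1-e(H_i))(1-v^*)^{e(H_i)} = \beta_i(1-e(H_i)) + o(1)$. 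Summing these contributions produces $\psi_\infty^\beta = \sum_{i=1}^k \beta_i + o(1)$, which delivers the second claim.

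The main obstacle is essentially bookkeeping rather than new analysis: in the $k$-parameter setting one must verify that the $k-1$ correction terms vanish uniformly along the parameter sequences under consideration, and that the additive $o(1)$ error in the nearly complete case remains asymptotically negligible against the growing quantity $\sum_{i=1}^k \beta_i$. Both of these are direct consequences of the bounds already assembled in the proof of Theorem \ref{generic}, so the extension from the two-parameter to the $k$-parameter statement requires no substantively new estimates.
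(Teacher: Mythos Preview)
Your proposal is correct and follows essentially the same route as the paper: both derive the collapsed identity $\psi_\infty^\beta=\sum_{i=2}^k\beta_i(1-e(H_i))(u^*)^{e(H_i)}-\tfrac12\log(1-u^*)$ from (\ref{kreduce}) and then invoke the asymptotics of Theorem \ref{generic} in the sparse and nearly complete regions. Your write-up is in fact a bit more explicit than the paper's, spelling out the substitution that produces the identity and carrying out the cancellation $\sum_{i=2}^k\beta_i(1-e(H_i))+\sum_{i=1}^k\beta_i e(H_i)=\sum_{i=1}^k\beta_i$ in the dense case.
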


\begin{proof}
For $(\beta_1,\ldots,\beta_k)$ whose magnitude is sufficiently big, we examine the limiting normalization constant (\ref{max}) separately in the sparse region and in the nearly complete region.

In the sparse region ($\sum_{i=1}^k \beta_i\leq 0$),
\begin{eqnarray}
\psi_\infty^\beta&=&\sum_{i=1}^k \beta_i (u^*)^{e(H_i)}-\frac{1}{2}u^*\log u^*-\frac{1}{2}(1-u^*)\log(1-u^*)\\
&=& \sum_{i=2}^k \beta_i (u^*)^{e(H_i)}(1-e(H_i))-\frac{1}{2}\log(1-u^*)\notag.
\end{eqnarray}
From Theorem \ref{generic}, $\beta_i(u^*)^{e(H_i)-1}\rightarrow 0$ for all $2\leq i\leq k$ and $-\log(1-u^*)\asymp u^* \asymp e^{2\beta_1}$. This shows that $\psi_\infty^\beta$ asymptotically behaves like $e^{2\beta_1}/2$.

In the nearly complete region ($\sum_{i=1}^k \beta_i>0$),
\begin{eqnarray}
\psi_\infty^\beta&=& \sum_{i=1}^k \beta_i (1-v^*)^{e(H_i)}-\frac{1}{2}v^*\log v^*-\frac{1}{2}(1-v^*)\log(1-v^*)\\
&=& \sum_{i=2}^k \beta_i (1-v^*)^{e(H_i)}(1-e(H_i))-\frac{1}{2}\log v^*\notag.
\end{eqnarray}
From Theorem \ref{generic}, $\beta_i v^* \rightarrow 0$ for all $2\leq i\leq k$ and $v^* \asymp e^{\sum_{i=1}^k -2\beta_ie(H_i)}$. This shows that $\psi_\infty^\beta$ asymptotically behaves like $\sum_{i=1}^k \beta_i$.
\end{proof}

In the edge-(multiple)-star model, due to the unique structure of stars, maximizers of the variational problem for the limiting normalization constant $\psi_\infty^\beta$ satisfies (\ref{kreduce}) for any $\beta_1,\ldots,\beta_k$, and the near degeneracy predictions may be extended to the ``repulsive'' region. Using similar techniques as in \cite{YZ1}, it may be shown that for $(\beta_1,\ldots,\beta_k)$ sufficiently far away from the origin and $\beta_2,\ldots,\beta_k$ all negative, $u^*\rightarrow 0$ when $\beta_1<0$ and $u^*\rightarrow 1$ when $\sum_{i=1}^k \beta_ie(H_i)>0$. Then analogous conclusions as in Theorems \ref{generic} and \ref{inf} may be drawn:
\begin{itemize}
\item $u^* \asymp e^{2\beta_1}$ and $\psi_\infty^\beta \asymp e^{2\beta_1}/2$ if $\sum_{i=1}^k \beta_i \leq 0$ and $\beta_1<0$,
\item $u^* \asymp 1-e^{\sum_{i=1}^k -2\beta_ie(H_i)}$ and $\psi_\infty^\beta \asymp \sum_{i=1}^k \beta_i$ \\ \indent if $\sum_{i=1}^k \beta_i>0$ and $\sum_{i=1}^k \beta_ie(H_i)>0$.
\end{itemize}
We omit the proof details.

\section*{Acknowledgements}
The author is very grateful to the anonymous referees for the
invaluable suggestions that greatly improved the quality of this
paper. She appreciated the opportunity to talk about this work in
the Special Session on Topics in Probability at the 2016 AMS
Western Spring Sectional Meeting, organized by Tom Alberts and
Arjun Krishnan.

\end{document}